\documentclass[11pt,a4paper]{article}

\usepackage{graphicx}
\usepackage{amsmath, amsthm, amsfonts}

\newtheorem{prp}{Proposition}[section]
\usepackage[latin1]{inputenc}
\usepackage[symbol]{footmisc}

% MARGINI
%	1 in = 2.54 centimeters
% A4 = 8.3 � 11.7 in
\oddsidemargin  -0.2in %+ 1in 
\evensidemargin -0.2in %+ 1in 
\textwidth       6.7in
\headheight      0.5in
\topmargin      -0.5in
\textheight      9.4in

\title{Optimal retirement in presence of stochastic labor income: a free boundary approach in an incomplete market framework} 
\author{Daniele Marazzina\footnote{daniele.marazzina@polimi.it}\\ 
Politecnico di Milano, Department of Mathematics, Milano, Italy \\
}

\begin{document}

\maketitle

\begin{abstract}
In this work, we address the optimal retirement problem in the presence of a stochastic wage, formulated as a free boundary problem. Specifically, we explore an incomplete market setting where the wage cannot be perfectly hedged through investments in the risk-free and risky assets that characterize the financial market.
	
	%\textbf{Keywords}:
	%\textbf{JEL Classification}: 
\end{abstract}

\section{Introduction}
In this article, we extend the work of \cite{BM}, where a stochastic wage perfectly correlated to the risky asset is considered, assuming that the wage may not be perfectly hedged exploiting the financial market, and we show how the free boundary problem is modified due to the incomplete market assumption, exploiting the theoretical framework in \cite{Tepla}.

More precisely, we study a stochastic control problem involving the consumption-portfolio-leisure
policy and the optimal stopping time of retirement. Relating to the leisure rate, the agent
earns the labour income with a stochastic wage rate. The sum of labour and leisure rates is assumed
to be constant. As the complement, the labour rate is lower bounded by a positive constant for the consideration of retaining the employment state before retirement, and then, after declaring retirement, the
agent realizes the full leisure rate. 

The problem is solved exploiting the duality method, which can be summarized into four steps. We
first tackle the post-stopping time problem to deduce a closed form of the corresponding value
function. Then we apply the Legendre-Fenchel transform to the utility function and the value
function of post-stopping time problem obtained in the first step. Afterwards, we construct the
duality between the optimal control problem with the individual's shadow prices problem, by
the aid of the liquidity and budget constraints and the dual transforms acquired before. Finally,
we cast the dual shadow price problem as a free boundary problem. We refer to \cite{BM,CSS,DM2,DM3,DL} for related works. The obtained free boundary problem can not be solved analytically, but only numerically, e.g., exploiting a finite difference scheme as in \cite{BM}. 

The article is organized as follows: in Section \ref{MOD} we deal with the considered problem, in Section \ref{DUAL} we show how the duality approach is applied, finally Section \ref{HJB} contains the main result of the article, that is the partial differential equation (pde) of the related free boundary problem.

\section{The Optimal Consumption-Investment Problem}\label{MOD}

We consider a continuous time economy. The infinite horizon life
of the agent is divided in two parts: before retirement and after
retirement. In the first part of his life the agent consumes,
invests in financial markets and chooses the labor supply rate earning an exogenous
stochastic income, after retirement he does not work, he only
consumes and invests his wealth in financial markets. After retirement the
agent cannot go back to work.

The labor income process $Y(t)$ (wage) is exogenous and stochastic. More specifically, we assume
\begin{equation}
\label{labor}
dY(t)= \mu_1 Y(t)dt + \boldsymbol{\mu}^{\top}_2 Y(t)  d\boldsymbol{B}(t), \
\ Y(0)=Y_0>0,
\end{equation}
where $\mu_1$ is a scalar function, $\boldsymbol{\mu_2}$ is a
$2d$-vector of functions, such that $Y(t)> 0,\ \forall t>0$ and $\boldsymbol{B}(t)$ is a $2d$-vector of
Brownian motions.

We assume that there are 2 assets: the risk-free asset with a constant
instantaneous interest rate $r$ and a risky assets. Asset price
$S(t)$ evolve as
\begin{equation}\label{risky}
d{S}(t)={S}(t) \left({b} dt+ \sigma d{B}_1(t)\right), \ \ S(0)=S_{0},
\end{equation}
where $\boldsymbol{b}$ is the constant drift of the risky
asset price, $\sigma$ is the volatility matrix. Notice that (\ref{risky}) can be re-written as 
\begin{equation}\label{risky2}
d{S}(t)={S}(t) \left({b} dt+ \boldsymbol{\sigma}^{\top} d\boldsymbol{B}(t)\right), \ \ S(0)=S_{0},
\end{equation}
with $\boldsymbol{\sigma}=[\sigma;0]$. Since the kernel of the vector $\boldsymbol{\sigma}$ is not empty, the market is incomplete: we are not able to perfectly hedge/insurance the wage just using the market assets. As in \cite{Tepla}, let us define the space 
\begin{eqnarray}
K(\sigma)=\Big{\{} \boldsymbol{v}\in \mathbb{R}^2\ : \ \boldsymbol{\sigma}^{\top}\boldsymbol{v}=0\Big{\}}.\label{K}
\end{eqnarray}

We denote by $c(t)$, $\theta(t)$ and
$l(t)$ consumption, risky asset portfolio and leisure
processes, respectively. We make the following assumptions on these processes:

\begin{itemize}
\item
$c(t)$ is a non negative progressively process measurable with respect to ${\cal F}$, such that $\int_0^t
c(s)ds < \infty, \  \forall t \ge 0$ a.s.,
\item
$\theta(t)$ represents the amount of wealth invested in the risky asset at time $t$, it is a progressively adapted process
with respect to ${\cal F}$, such that $\int_0^t \theta^2(s) ds < \infty, \  \forall t \ge 0$ a.s.,
\item
$l(t) \in\left[0,1\right]$ is the rate of leisure at time $t$, $l(t)$ is a
process measurable with respect to ${\cal F}$.
\end{itemize}

Let $\tau \ge 0$ be the retirement date chosen by the
agent, i.e., $\tau \in {\cal S}$, where ${\cal S}$ denotes the set of all ${\cal F}$-stopping time.
The retirement decision is irreversible:
during his working life the agent chooses the leisure rate, after
retirement he fully enjoys leisure. At time $t$ the agent is endowed with
one unit of ``time'', before retirement he allocates his time between labor ($1-l(t)$) and leisure
($l(t)$) with the constraint that he has to work at least $1-L$, after retirement
he only enjoys leisure:
\begin{equation*}
0\leq l(t) \leq L<1 \ \textrm{if}\ 0 \leq t < \tau, \ \ \ l(t)=1 \
\textrm{if}\ t\geq \tau.
\end{equation*}

Let $W(0)=W_0$ be the initial wealth. The
consumption-investment-leisure strategy $(c,l,\theta)$ is admissible if it
satisfies the above technical conditions and the wealth process
$W(t)$ satisfies the dynamic budget constraint
\begin{equation}
\label{budget}
dW(t)\!= \!\left[(1-l(t))Y(t)\! -\! c(t)\right]dt +
{\theta}(t)\left({b}\,dt+\boldsymbol{\sigma}^{\top}d\boldsymbol{B}(t) \right) +
\left(\!W(t)-\!\theta(t)\!\right)\! r dt
\end{equation}
where $(1-l(t))Y(t) dt$ is the labor income received before
retirement, i.e., for $0 \le t < \tau$,
${\theta}(t)\left({b}\,dt+\boldsymbol{\sigma}^{\top}d\boldsymbol{B}(t) \right)$ refers to
the financial wealth evolution and $\left(W(t)-\theta(t)\right) r dt$ to
the wealth invested in the money market (risk-free bond).

We impose a no borrowing condition
\begin{equation}
\label{wealth}
W(t)\geq 0, \ \ \forall t \ge 0
\end{equation}
that is the agent cannot borrow money in the risk-free market
against future income.

The agent maximizes the expected utility over the infinite horizon:
preferences are time separable with exponential discounting.
The instantaneous utility at time $t$ is a function of consumption
and leisure, i.e., $u(c(t),l(t))$. Therefore, given the
initial wealth $W_0$ and the labor income process $Y(t)$, we look
for an admissible process triplet $(c,l,{\theta})$ and a stopping time $\tau \in {\cal S}$
that maximize
\begin{equation}\label{exp}
E\left[ \int_{0}^{+\infty} e^{-\beta t}u(c(t), l(t)) dt \right],
\end{equation}
subject to the budget constraint (\ref{budget}) and the no
borrowing condition (\ref{wealth}). $\beta$ is the subjective
discount rate and the utility function is given by
\begin{equation}\label{utilityfunction}
u(c,l)=\frac{1}{\alpha}\frac{(l^{1-\alpha}c^{\alpha})^{1-\gamma}}{1-\gamma}, \ \gamma>0, \  0<\alpha<1.
\end{equation}

The infinite horizon is a strong
assumption that allows us to handle analytically the problem; to
make the setting more plausible, we may allow for an hazard rate of
mortality as in \cite{DL} adding a positive coefficient to the discount
factor $\beta$.

Given the initial wealth $W_0$ and the labor income
$Y_0$, the agent maximizes the expected utility
(\ref{exp}):
\begin{eqnarray*}
{\cal J}(W_0,Y_0;c,l,{\theta},\tau)&:=&E\left[ \int_{0}^{+\infty} e^{-\beta t}u(c(t), l(t)) dt \right]\\
&=&E\left[ \int_{0}^{\tau}e^{-\beta t}u(c(t), l(t)) dt
+\int_{\tau}^{+\infty} e^{-\beta t}u(c(t), 1) dt\right]
\end{eqnarray*}
acting on $(c,l,{\theta},\tau)$ with $\tau \in {\cal S}$ and subject to the budget constraint
(\ref{budget}) and the no borrowing condition (\ref{wealth}). The value function is defined as
\begin{equation}\label{value}
{\cal V}(W_0,Y_0):=\sup_{(c, l, {\theta}, \tau)\in {\cal A}} {\cal
J}(W_0,Y_0;c,l,{\theta}, \tau),
\end{equation}
where ${\cal A}$ denotes the set of admissible strategies,
i.e., processes ($c,l,{\theta}, \tau$) that satisfy (\ref{budget})
and (\ref{wealth}), $\tau \in {\cal S}$, such that ${\cal
J}(W_0,Y_0;c,l,{\theta},\tau)<+\infty$.

\section{A dual approach}\label{DUAL}

As in \cite{Tepla}, considering (\ref{K}), let us define the market price of risk 
\begin{equation*}
\boldsymbol{\Theta}(\boldsymbol{v})\!:=\!\boldsymbol{\sigma}\left(\boldsymbol{\sigma}^{\top}\boldsymbol{\sigma}\right)^{-1}(b -r)+\boldsymbol{v},\	 \forall \boldsymbol{v}\in K(\sigma),
\end{equation*}
the state-price-density process (or pricing kernel)
\begin{equation*}
H(t,\boldsymbol{v}):=e^{-\left(r+\frac{1}{2}|\boldsymbol{\Theta}(\boldsymbol{v})|^2\right)t-\boldsymbol{\Theta}^{\top}(\boldsymbol{v})
\boldsymbol{B}(t)}
\end{equation*}
and, for a given $t>0$, the equivalent risk neutral martingale measure
\begin{equation*}
\overline{P}(A):=E\left[e^{-\boldsymbol{\Theta}^{\top}(\boldsymbol{v})
\boldsymbol{B}(t) - \frac{1}{2}|\boldsymbol{\Theta}(\boldsymbol{v})|^2 t}
\mathbf{1}_{A}\right] \ \  \forall A \in {\cal F}_t.
\end{equation*}
Under the (non-unique) risk neutral density we can redefine the budget and the
no borrowing constraint as static constraints. 
Following \cite{CSS}, by the optional
sampling theorem we have that the budget constraint becomes
\begin{equation}\label{BC}
E\left[\int_{0}^{\tau}H(t,\boldsymbol{v})\left(c(t)+Y(t) (l(t)-1)\right)dt +
H(\tau,\boldsymbol{v})W(\tau)\right]\leq W_0,\ \ \forall \boldsymbol{v}\in K(\sigma)
\end{equation}
and the no borrowing constraint ($W(t)\geq 0, \ \forall t,  \ 0 \le t \le \tau$) is equivalent to
\begin{equation}\label{LC}
E_{t}\left[\int_{t}^{\tau}\frac{H(s,\boldsymbol{v})}{H(t,\boldsymbol{v})}\left(c(s)+Y(s)
(l(s)-1)\right)ds + \frac{H(\tau,\boldsymbol{v})}{H(t,\boldsymbol{v})}W(\tau) \right]\geq 0
\end{equation}
$\forall \ t\ :\ 0\leq t\leq \tau,\ \forall \boldsymbol{v}\in K(\sigma).$ Note that the no borrowing constraint is never binding after retirement.
The objective function can be rewritten as
\begin{eqnarray}
{\cal J}(W_0,Y_0;c,l,{\theta},\tau)\!\!\!\!&=&\!\!\!\!
E\left[ \int_{0}^{\tau}\! e^{-\beta t} u(c(t), l(t)) dt + e^{-\beta \tau}\int_{\tau}^{+\infty}\!\! e^{-\beta (t-\tau)} u(c(t), 1) dt\right]\nonumber\\
&=&\!\!\!\!  E\left[ \int_{0}^{\tau} e^{-\beta t} u(c(t), l(t)) dt +
e^{-\beta \tau}U(W(\tau))\right],\nonumber
\end{eqnarray}
where
\begin{eqnarray}\label{EU}
U(W(\tau)):=\sup_{(c,{\theta})} E\left[ \int_{\tau}^{+\infty}
e^{-\beta (t-\tau)} u(c(t), 1) dt\right]
\end{eqnarray}
is the optimal expected utility attainable at time $\tau$ with wealth $W(\tau)$:
for $t \ge \tau$ the agent solves the classical optimal consumption-portfolio problem
with initial wealth $W(\tau)$ subject to the budget constraint.
$U(W(\tau))$ is the indirect utility function associated with the maximization problem at time $\tau$.

We define the convex conjugate of the
utility function $u(c,l)$:
\begin{equation}
\label{dual}
\widetilde{u}(z,Y):=\max_{\scriptsize{\begin{matrix}c\geq0\\ 0\leq
l\leq L\end{matrix}}} u(c,l)-(c+Yl)z.
\end{equation}
In a similar way, we also define the convex conjugate of $U$:
\begin{eqnarray}\label{dualU}
\widetilde{U}(z):=\sup_{w\geq 0} U(w)-wz.
\end{eqnarray}
If $I$ denotes the inverse of $U'$, i.e., $I(z)=\inf\{w\,:\, U'(w)=z \}$, then
we also have $\widetilde{U}(z)=U(I(z))-zI(z)$. 

Let $\widehat{c}$ and $\widehat{l}$ be the pair of
processes that provides a solution to (\ref{dual}),
then the following proposition holds true.

\begin{prp}\label{prp:utilde}
Let
\begin{eqnarray}\label{eq:xtilde}
\widetilde{z}=\left(\frac{\alpha}{1-\alpha}Y\right)^{\alpha(1-\gamma)-1}L^{-\gamma}.
\end{eqnarray}
If $z\geq\widetilde{z}$, then
\begin{equation}
\widetilde{u}(z,Y)= \frac{1}{\alpha}\frac{(\widehat{l}^{\,1-\alpha}\ \widehat{c}^{\,\alpha})^{1-\gamma}}{1-\gamma}-(\widehat{c}+Y\widehat{l})z,\label{utilde1}
\end{equation}
where $\widehat{c}=\frac{\alpha}{1-\alpha}Y\widehat{l}\ \ \textrm{and} \ \
\widehat{l}=z^{-\frac{1}{\gamma}}
\left(\frac{\alpha}{1-\alpha}Y\right)^{\frac{\alpha(1-\gamma)-1}{\gamma}}.$\\
If $z<\widetilde{z}$, then
\begin{equation}
\widetilde{u}(z,Y)= \frac{\left(\widehat{c}^{\,\alpha}L^{1-\alpha}\right)^{(1-\gamma)}}{\alpha(1-\gamma)}-(\widehat{c}+YL)z,\label{utilde2}
\end{equation}
where $\widehat{c}=\left(zL^{(\alpha-1)(1-\gamma)}\right)^{\frac{1}{\alpha(1-\gamma)-1}}.$
\end{prp}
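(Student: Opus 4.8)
The plan is to treat (\ref{dual}) as a concave maximization over the rectangle $\{c\ge 0,\ 0\le l\le L\}$ and characterize its solution through first-order (KKT) conditions. The first thing I would establish is that the map $(c,l)\mapsto u(c,l)-(c+Yl)z$ is jointly concave on the open positive orthant. Writing $u(c,l)=\frac{1}{\alpha(1-\gamma)}c^{\alpha(1-\gamma)}l^{(1-\alpha)(1-\gamma)}$ from (\ref{utilityfunction}), this is a Cobb--Douglas expression with exponents summing to $1-\gamma$ times the constant $\frac{1}{\alpha(1-\gamma)}$; one checks that for $\gamma<1$ the exponents are positive with sum in $(0,1)$ and the constant positive, while for $\gamma>1$ the exponents and the constant are all negative, so $u$ is concave in both regimes. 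Since the penalty $-(c+Yl)z$ is linear, the objective is concave, so any interior point satisfying the stationarity equations is a global maximizer, while on the boundary $l=L$ the active constraint is handled by its KKT multiplier.

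Next I would write the interior stationarity equations $\partial_c u=z$ and $\partial_l u=Yz$. Dividing the second by the first eliminates $z$ and yields the marginal-rate-of-substitution relation $\frac{1-\alpha}{\alpha}\frac{c}{l}=Y$, i.e. $\widehat c=\frac{\alpha}{1-\alpha}Y\widehat l$. Substituting this into $\partial_c u=z$ and collecting the powers of $l$ via the identity $(1-\alpha)(1-\gamma)+\alpha(1-\gamma)-1=-\gamma$ reduces the equation to $\widehat l^{-\gamma}\left(\frac{\alpha}{1-\alpha}Y\right)^{1-\alpha(1-\gamma)}=z$, which solves to the stated $\widehat l=z^{-1/\gamma}\left(\frac{\alpha}{1-\alpha}Y\right)^{(\alpha(1-\gamma)-1)/\gamma}$. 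Formula (\ref{utilde1}) then follows by direct substitution into (\ref{dual}), the stated value being simply $u(\widehat c,\widehat l)-(\widehat c+Y\widehat l)z$.

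I would then determine the admissibility of this interior candidate. The relevant constraint is $\widehat l\le L$ (the bound $l\ge 0$ and the positivity of $c$ are never active, by the Inada-type blow-up of marginal utilities as $c\to 0^+$ or $l\to 0^+$). Raising $\widehat l\le L$ to the power $-\gamma<0$ reverses the inequality and, after inserting the expression for $\widehat l^{-\gamma}$, gives exactly $z\ge L^{-\gamma}\left(\frac{\alpha}{1-\alpha}Y\right)^{\alpha(1-\gamma)-1}=\widetilde z$ as defined in (\ref{eq:xtilde}); this proves the first case. For $z<\widetilde z$ the interior leisure choice would exceed $L$, so by concavity the constraint $l\le L$ is active at the optimum; fixing $\widehat l=L$ and optimizing over $c$ alone through $\partial_c u=z$ gives $\widehat c^{\,\alpha(1-\gamma)-1}=zL^{(\alpha-1)(1-\gamma)}$, hence $\widehat c=\left(zL^{(\alpha-1)(1-\gamma)}\right)^{1/(\alpha(1-\gamma)-1)}$, and substitution yields (\ref{utilde2}).

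I expect the main obstacle to be the bookkeeping across the two sign regimes $\gamma<1$ and $\gamma>1$: the exponents and the leading constant change sign, so the concavity argument and the monotonicity used to invert and compare powers must be verified uniformly in both cases. Beyond that, the remainder is a clean interior/boundary split of a concave program, with (\ref{utilde1}) and (\ref{utilde2}) obtained by plugging the maximizers back into the objective.
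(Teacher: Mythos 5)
Your argument is correct and essentially complete, but it is worth noting that the paper does not actually prove this proposition: it simply cites \cite[Proposition 3.1]{BM}, so you are supplying a self-contained verification where the paper defers to an external reference. Your route --- joint concavity of $(c,l)\mapsto u(c,l)-(c+Yl)z$ on the positive orthant (checking the Cobb--Douglas Hessian separately for $\gamma<1$ and $\gamma>1$), interior stationarity giving $\widehat c=\frac{\alpha}{1-\alpha}Y\widehat l$ and the closed form for $\widehat l$, and the KKT split at $l=L$ with the threshold $\widetilde z$ obtained from $\widehat l\le L$ --- is exactly the standard Legendre--Fenchel computation that underlies the cited result, and your identification of the two sign regimes as the only delicate point is apt. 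The one concrete slip is in your displayed first-order condition: direct substitution of $\widehat c=\frac{\alpha}{1-\alpha}Y\widehat l$ into $\partial_c u=z$ gives
\[
\widehat l^{\,-\gamma}\left(\frac{\alpha}{1-\alpha}Y\right)^{\alpha(1-\gamma)-1}=z,
\]
i.e.\ the exponent on $\frac{\alpha}{1-\alpha}Y$ is $\alpha(1-\gamma)-1$, not $1-\alpha(1-\gamma)$ as you wrote; solving the equation as you stated it would produce $\widehat l=z^{-1/\gamma}\left(\frac{\alpha}{1-\alpha}Y\right)^{(1-\alpha(1-\gamma))/\gamma}$, which contradicts the formula you (correctly) report, and it would likewise flip the exponent in the threshold $\widetilde z$. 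Since your final expressions for $\widehat l$, $\widetilde z$ and $\widehat c$ all match the proposition, this is evidently a transcription error rather than a computational one, but it should be fixed for the chain of displayed equations to be internally consistent.
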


\begin{proof}
See \cite[Proposition 3.1]{BM}.
\end{proof}

Let $\lambda>0$ be a Lagrange Multiplier. We consider a non-increasing process $D(t)>0$ with $D(0)=1$ to take into account
the no bankruptcy constraint, as in \cite{CSS,HP}, i.e., $D(t)$ is the integral of the shadow price of the no bankruptcy constraint. Then, the following proposition holds.
\begin{prp}
\label{JV}
Let
\begin{eqnarray}
\widetilde{V}(\lambda,D,\tau,Y_0,\boldsymbol{v})\!\!\!\!&:=&\!\!\!\!E\Big{[} \int_{0}^{\tau}e^{-\beta t}\left( \widetilde{u}\left(\lambda D(t) e^{\beta t}H(t,\boldsymbol{v}),\,Y(t)\right)+\lambda Y(t) D(t) e^{\beta t}H(t,\boldsymbol{v})\right) dt \nonumber\\
&+&\!\!\!\! e^{-\beta \tau}\widetilde{U}\left(\lambda D(\tau)e^{\beta\tau}H(\tau,\boldsymbol{v}) \right) \Big{]}, \label{Vldty}
\end{eqnarray}
then
\begin{equation}
{\cal J}(W_0,Y_0;c,l,\boldsymbol{\theta},\tau)\leq
\widetilde{V}(\lambda,D,\tau,Y_0,\boldsymbol{v}) + \lambda W_0.
\label{ineq}
\end{equation}
\end{prp}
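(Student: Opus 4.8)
The plan is to obtain \eqref{ineq} from the Fenchel--Young inequalities built into the two conjugates $\widetilde{u}$ and $\widetilde{U}$, evaluated along the dual state variable
\[
z(t):=\lambda D(t)e^{\beta t}H(t,\boldsymbol{v}),
\]
and then to absorb the resulting remainder into $\lambda W_0$ by combining the static budget constraint \eqref{BC}, the liquidity constraint \eqref{LC}, and the monotonicity of $D$.

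First I would use the very definitions \eqref{dual} and \eqref{dualU}. For the admissible pair $(c(t),l(t))$ the maximisation in \eqref{dual} gives the pointwise bound $u(c(t),l(t))\le \widetilde{u}(z(t),Y(t))+(c(t)+Y(t)l(t))\,z(t)$, while \eqref{dualU} together with $W(\tau)\ge 0$ gives $U(W(\tau))\le \widetilde{U}(z(\tau))+W(\tau)\,z(\tau)$. Multiplying the first by $e^{-\beta t}$, the second by $e^{-\beta\tau}$, integrating over $[0,\tau]$ and taking expectations, and noting that $e^{-\beta t}z(t)=\lambda D(t)H(t,\boldsymbol{v})$, the terms $e^{-\beta t}\widetilde{u}(z(t),Y(t))$ and $e^{-\beta\tau}\widetilde{U}(z(\tau))$ rebuild $\widetilde{V}$ once the auxiliary summand $\lambda Y(t)D(t)e^{\beta t}H(t,\boldsymbol{v})$ appearing in \eqref{Vldty} is accounted for. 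What survives is exactly
\[
\mathcal{J}-\widetilde{V}\le \lambda\,E\!\left[\int_0^\tau D(t)H(t,\boldsymbol{v})\bigl(c(t)+Y(t)(l(t)-1)\bigr)\,dt+D(\tau)H(\tau,\boldsymbol{v})W(\tau)\right],
\]
so the proposition reduces to showing that the bracketed expectation is at most $W_0$: the static budget inequality must survive the insertion of the non-increasing weight $D$.

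The hard part is this $D$-weighted budget estimate, which I would establish by integration by parts. Writing $X(t):=c(t)+Y(t)(l(t)-1)$ and $\mathcal{M}(t):=E_t\bigl[\int_0^\tau H(s,\boldsymbol{v})X(s)\,ds+H(\tau,\boldsymbol{v})W(\tau)\bigr]$, the process $\mathcal{M}$ is a martingale on $[0,\tau]$ and $\mathcal{M}(0)\le W_0$ by \eqref{BC}. Setting $G(t):=\mathcal{M}(t)-\int_0^t H(s,\boldsymbol{v})X(s)\,ds=E_t\bigl[\int_t^\tau H(s,\boldsymbol{v})X(s)\,ds+H(\tau,\boldsymbol{v})W(\tau)\bigr]$, the liquidity constraint \eqref{LC} forces $G(t)\ge 0$ for all $t\in[0,\tau]$, with $G(\tau)=H(\tau,\boldsymbol{v})W(\tau)$. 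Since $H(t,\boldsymbol{v})X(t)\,dt=d\mathcal{M}(t)-dG(t)$, $G$ is continuous, and $D$ has finite variation (so $[D,G]\equiv 0$), integration by parts gives
\[
\int_0^\tau D(t)H(t,\boldsymbol{v})X(t)\,dt+D(\tau)G(\tau)=\int_0^\tau D(t)\,d\mathcal{M}(t)+G(0)+\int_0^\tau G(t)\,dD(t).
\]
Taking expectations, the stochastic integral against the martingale $\mathcal{M}$ has zero mean (as $D$ is bounded and predictable), $G(0)=\mathcal{M}(0)\le W_0$, and the last term is nonpositive because $G\ge 0$ while $dD\le 0$; this yields the bound $\le W_0$ and hence \eqref{ineq}.

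The points needing care are purely technical: the integrability ensuring that $\int_0^\cdot D\,d\mathcal{M}$ is a true (not merely local) martingale with vanishing expectation at $\tau$, and the regularity making $\mathcal{M}$ and $G$ well-defined semimartingales on the stochastic interval $[0,\tau]$. These are treated as in \cite{CSS,HP}; the remainder is the bookkeeping of substituting $z(t)=\lambda D(t)e^{\beta t}H(t,\boldsymbol{v})$ and matching terms with \eqref{Vldty}.
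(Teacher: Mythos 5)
Your proof is correct and follows essentially the same route as the paper's: Fenchel--Young applied to $\widetilde{u}$ and $\widetilde{U}$ along $z(t)=\lambda D(t)e^{\beta t}H(t,\boldsymbol{v})$, followed by the $D$-weighted budget bound obtained from \eqref{BC}, \eqref{LC} and $dD\le 0$. The only difference is that you explicitly derive, via the martingale $\mathcal{M}$ and integration by parts, the identity that the paper simply asserts in \eqref{leq2} (deferring details to \cite{CSS}), which is a welcome addition rather than a deviation.
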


\begin{proof}
Considering (\ref{dual}) and (\ref{dualU}), it holds
\begin{eqnarray}
&&\!\!\!\!\!\!\!\!\!\!{\cal J}(W_0,Y_0;c,l,\boldsymbol{\theta},\tau)\!=\! E\Big{[}\! \int_{0}^{\tau}\!\!\!\!e^{-\beta t}\!\left\{u(c(t),l(t))\!-\!\lambda D(t) e^{\beta t}H(t,\boldsymbol{v})(c(t)\!+\!Y(t)l(t)) \right\}dt\nonumber\\
&&+e^{-\beta \tau}\left\{ U(W(\tau))-\lambda D(\tau)e^{\beta \tau}H(\tau,\boldsymbol{v})W(\tau)\right\} \Big{]}\nonumber\\
&&+\lambda\, E\left[ \int_{0}^{\tau}D(t) H(t,\boldsymbol{v})(c(t)+Y(t) l(t)) dt+D(\tau)H(\tau,\boldsymbol{v}) W(\tau) \right]\nonumber\\
&&\leq E\left[ \int_{0}^{\tau}e^{-\beta t}\,\widetilde{u}\left(\lambda D(t) e^{\beta t}H(t,\boldsymbol{v}),\,Y(t)\right)dt+e^{-\beta \tau} \widetilde{U}\left(\lambda D(\tau)e^{\beta \tau}H(\tau,\boldsymbol{v})\right)\right]\nonumber\\
&&+ \lambda\, E\left[ \int_{0}^{\tau}D(t) H(t,\boldsymbol{v})(c(t)+Y(t) l(t))
dt+D(\tau)H(\tau,\boldsymbol{v}) W(\tau) \right]\label{j2}.
\end{eqnarray}
Exploiting the constraints (\ref{BC}), (\ref{LC}) and being $D(t)$ non increasing such that $D(0)=1$ we obtain
\begin{eqnarray}
&&\!\!\!\!E\!\left[ \int_{0}^{\tau}\!\!D(t) H(t,\boldsymbol{v})(c(t)\!+\!Y(t) l(t)) dt\!+\!D(\tau)H(\tau,\boldsymbol{v}) W(\tau)\! \right]\nonumber\\
&&\!\!\!\!\!\!\!\!\!\!=E\!\left[ \int_{0}^{\tau}\!\!D(t) H(t,\boldsymbol{v})(c(t)\!+\!(l(t)\!-\!1) Y(t)) dt\!+\!D(\tau)H(\tau,\boldsymbol{v}) W(\tau)\!+\!\int_{0}^{\tau}\!\!Y(t) D(t)H(t,\boldsymbol{v})dt\! \right] \nonumber\\
&&\!\!\!\!\!\!\!\!\!\!= E\!\left[ \int_{0}^{\tau}\!\!Y(t) D(t)H(t,\boldsymbol{v})dt\! \right] + E\!\left[\! H(\tau,\boldsymbol{v}) W(\tau)\! +\! \int_{0}^{\tau}\!\!H(t,\boldsymbol{v})(c(t)\!+\!(l(t)\!-\!1) Y(t))dt\!\right]\nonumber\\
&&\!\!\!\!\!\!\!\!\!\!+ E\!\left[\int_{0}^{\tau}\!\!H(t,\boldsymbol{v})E_{t}\!\left[\!\frac{H(\tau,\boldsymbol{v})}{H(t,\boldsymbol{v})}W(\tau)\!+\!\int_t^{\tau}\!\!\frac{H(s,\boldsymbol{v})}{H(t,\boldsymbol{v})}(c(s) \!+\!(l(s)\!-\!1) Y(s))ds \right]dD(t) \right]\nonumber\\
&&\!\!\!\!\!\!\!\!\!\!\leq E\!\left[ \int_{0}^{\tau}\!\!Y(t) D(t)H(t,\boldsymbol{v})dt\! \right] \! +\!  W_0.\label{leq2}
\end{eqnarray}
Thus inequality (\ref{ineq}) holds true.
\end{proof}

For a fixed stopping time $\tau \in {\cal S}$ and initial wealth $W_0$, we
denote by ${{\cal A}_{\tau}}$ the set of admissible processes
$(c,\,l,\,\boldsymbol{\theta})$. Thus we define
\begin{equation}\label{valuet}
V_{\tau} (W_0,Y_0):=\sup_{(c,\,l,\,\boldsymbol{\theta})\in{\cal
A}_{\tau}} {\cal J}(W_0,Y_0;c,l,\boldsymbol{\theta},\tau).
\end{equation}
From Proposition \ref{JV}, we obtain
\begin{equation}\label{dis}
V_{\tau}(W_0,Y_0) \leq \inf_{\lambda>0,\,D(t)>0,\, \boldsymbol{v}\in K(\sigma)}\left[
\widetilde{V}(\lambda,D,\tau,Y_0,\boldsymbol{v}) + \lambda W_0\right].
\end{equation}
and the following result holds true.

\begin{prp}\label{JV-eq}
Equality in (\ref{dis}) holds if all the following conditions are satisfied:\\
- for $0\leq t<\tau$
\begin{eqnarray}\label{P4a}
\frac{\partial u}{\partial c}(c(t),l(t))=\lambda D(t) e^{\beta t}H(t,\boldsymbol{v})\ \ \textrm{ and}\ \ \frac{\partial u}{\partial l}(c(t),l(t))=\lambda Y(t) D(t) e^{\beta t}H(t,\boldsymbol{v})
\end{eqnarray}
if $l(t)\leq L$, otherwise
\begin{eqnarray}\label{P4b}
\frac{\partial u}{\partial c}(c(t),L)=\lambda D(t) e^{\beta t}H(t,\boldsymbol{v})\ \ \textrm{ and}\ \ l(t)=L;
\end{eqnarray}
- it holds
\begin{eqnarray*}
&&W(\tau)=I(\lambda D(\tau) e^{\beta \tau}H(\tau,\boldsymbol{v})),\\
&&E\left[\int_{0}^{\tau}H(s,\boldsymbol{v})(c(s)+(l(s)-1)Y(s))ds+H(\tau,\boldsymbol{v})W(\tau)
\right]=W_0;
\end{eqnarray*}
- for any $t\in [0,\tau)$ such that $D(t)$ is not constant, i.e., $dD(t) \neq 0$,
\begin{eqnarray}
E_t\left[\int_{t}^{\tau}\frac{H(s,\boldsymbol{v})}{H(t,\boldsymbol{v})}(c(s)+(l(s)-1)Y(s))ds+\frac{H(\tau,\boldsymbol{v})}{H(t,\boldsymbol{v})}W(\tau)
\right]=0.\label{prp2}
\end{eqnarray}
\end{prp}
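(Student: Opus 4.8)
The plan is to revisit the two inequalities appearing in the proof of Proposition \ref{JV} and to identify, one at a time, the precise circumstances under which each collapses to an equality. Since (\ref{dis}) is obtained from (\ref{ineq}) by optimizing over the primal strategy and the dual data, once both inequalities in the proof of Proposition \ref{JV} are saturated for a matching pair of controls $(c,l,\boldsymbol{\theta})$ and multipliers $(\lambda,D,\boldsymbol{v})$, the chain ${\cal J}\leq V_{\tau}\leq \widetilde{V}+\lambda W_0$ closes up and equality holds in (\ref{dis}).

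First I would treat the step (\ref{j2}), which is nothing but the pointwise conjugate inequalities $u(c,l)-z(c+Yl)\leq\widetilde{u}(z,Y)$ and $U(w)-zw\leq\widetilde{U}(z)$ evaluated at $z=\lambda D(t)e^{\beta t}H(t,\boldsymbol{v})$ (respectively at $\tau$). By the very definitions (\ref{dual})--(\ref{dualU}) of the Legendre--Fenchel transforms, equality holds if and only if $(c(t),l(t))$ attains the maximum in (\ref{dual}) and $W(\tau)$ attains the supremum in (\ref{dualU}). Because $u$ in (\ref{utilityfunction}) is jointly concave in $(c,l)$ (a power of the concave Cobb--Douglas aggregate $l^{1-\alpha}c^{\alpha}$ composed with an increasing concave map), the first--order conditions for the constrained maximization over $c\geq0$, $0\leq l\leq L$ are both necessary and sufficient: when the bound $l\leq L$ is inactive one obtains the stationarity equations $u_c=z$ and $u_l=Yz$, that is (\ref{P4a}); when it is active one obtains $l=L$ together with $u_c=z$, that is (\ref{P4b}). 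The single stationarity condition $U'(w)=z$ for (\ref{dualU}) is precisely $W(\tau)=I(\lambda D(\tau)e^{\beta\tau}H(\tau,\boldsymbol{v}))$.

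Next I would reexamine the chain (\ref{leq2}). The manipulation there (writing $D(t)=1+\int_0^t dD(s)$, exchanging the order of integration, and then inserting the conditional expectation $E_t[\,\cdot\,]$) splits the left--hand side into the unchanged term $E[\int_0^{\tau}Y D H\,dt]$, the static budget functional $E[H(\tau,\boldsymbol{v})W(\tau)+\int_0^{\tau}H(t,\boldsymbol{v})(c+(l-1)Y)\,dt]$, and the Stieltjes term $E[\int_0^{\tau}H(t,\boldsymbol{v})\,M(t)\,dD(t)]$, where $M(t):=E_t[\frac{H(\tau,\boldsymbol{v})}{H(t,\boldsymbol{v})}W(\tau)+\int_t^{\tau}\frac{H(s,\boldsymbol{v})}{H(t,\boldsymbol{v})}(c+(l-1)Y)\,ds]$. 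Exactly two inequalities were used: the budget bound (\ref{BC}) on the second term, which is an equality iff (\ref{BC}) binds (the second displayed equality in the statement); and the sign estimate $M(t)\geq0$ coming from the liquidity constraint (\ref{LC}), paired against $dD(t)\leq0$, on the third term, which is an equality iff $M(t)=0$ whenever $dD(t)\neq0$, that is (\ref{prp2}).

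Finally I would assemble the pieces: under all the listed conditions both (\ref{j2}) and (\ref{leq2}) hold with equality for the chosen $(c,l,\boldsymbol{\theta})$ and $(\lambda,D,\boldsymbol{v})$, so ${\cal J}=\widetilde{V}+\lambda W_0$ and, together with the always--valid bounds, (\ref{dis}) is saturated. The step I expect to be the main obstacle is the budget/liquidity decomposition: one must justify that $M(t)$ is well defined and non-negative, that the Fubini exchange and the insertion of $E_t[\,\cdot\,]$ inside the $dD$--integral are legitimate (using adaptedness of $D$ and the bounded--variation, non-increasing character of its paths), and that the pairing of the signs of $M(t)$ and $dD(t)$ is the only source of slack in (\ref{leq2}); everything else is a direct reading of the conjugate first--order conditions.
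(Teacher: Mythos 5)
Your proposal is correct and follows essentially the same route as the paper: saturate the Fenchel--Young inequalities in (\ref{j2}) via the first--order conditions (which the paper delegates to Proposition \ref{prp:utilde} rather than re-deriving from concavity), and saturate (\ref{leq2}) by requiring the budget constraint to bind and the conditional-expectation term to vanish wherever $dD(t)\neq 0$. The only difference is that you spell out the measurability/Fubini caveats that the paper dispatches with a reference to \cite{CSS}.
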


\begin{proof}
Equality in (\ref{dis}) holds if and only if equalities in (\ref{j2}) and (\ref{leq2}) hold true.\\
Let us start with (\ref{j2}): this equality holds if
\begin{eqnarray*}
u(c(t),l(t))-\lambda D(t) e^{\beta t}H(t,\boldsymbol{v})(c(t)+Y(t)l(t))&=&\widetilde{u}\left(\lambda D(t) e^{\beta t}H(t,\boldsymbol{v}),\,Y(t)\right)\\
U(W(\tau))-\lambda D(\tau)e^{\beta \tau}H(\tau,\boldsymbol{v})W(\tau) &=&\widetilde{U}\left(\lambda D(\tau)e^{\beta \tau}H(\tau,\boldsymbol{v})\right).
\end{eqnarray*}
By Proposition \ref{prp:utilde}, this is equivalent to set for $0\leq t<\tau$ 
\begin{eqnarray*}
\frac{\partial u}{\partial c}(c(t),l(t))=\lambda D(t) e^{\beta t}H(t,\boldsymbol{v})\ \ \textrm{and}\ \  \frac{\partial u}{\partial l}(c(t),l(t))=\lambda Y(t) D(t) e^{\beta t}H(t,\boldsymbol{v})
\end{eqnarray*}
if $l(t)\leq L$, otherwise
\begin{eqnarray*}
\frac{\partial u}{\partial c}(c(t),L)=\lambda D(t) e^{\beta t}H(t,\boldsymbol{v})\ \ \textrm{and}\ \ l(t)=L;
\end{eqnarray*}
and, reasoning as above for the utility function $U$, to impose
\begin{eqnarray*}
U'(W(\tau))=\lambda D(\tau) e^{\beta \tau}H(\tau,\boldsymbol{v}),
\end{eqnarray*}
i.e., $W(\tau)=I(\lambda D(\tau) e^{\beta \tau}H(\tau,\boldsymbol{v}))$.\\
Now we deal with inequality (\ref{leq2}): it becomes an equality if
\begin{eqnarray}
&&\!\!\!\!\!\!\!\!\!\!\!\!\!\!\!\!\!\!\!\!\!\!\!\!E\!\left[H(\tau,\boldsymbol{v}) W(\tau) + \int_{0}^{\tau}H(t,\boldsymbol{v})(c(t)+(l(t)-1) Y(t))dt\right]=W_0, \label{proof1}\\
&&\!\!\!\!\!\!\!\!\!\!\!\!\!\!\!\!\!\!\!\!\!\!\!\!E\!\left[\int_{0}^{\tau}\!\!\!\!H(t,\boldsymbol{v})E_{t}\!\!\left[\frac{H(\tau,\boldsymbol{v})}{H(t,\boldsymbol{v})}W(\tau)\!+\!\int_t^{\tau}\!\!\frac{H(s,\boldsymbol{v})}{H(t,\boldsymbol{v})}(c(s) +(l(s)-1) Y(s))ds \right]dD(t) \right]=0,\label{proof2}
\end{eqnarray}
and the last condition is equivalent to (\ref{prp2}).\\
Notice that (\ref{proof1}) and (\ref{proof2}) imply (\ref{BC}) and (\ref{LC}), respectively. See \cite{CSS} for further details.
\end{proof}

Optimality conditions in (\ref{P4a}-\ref{P4b}) provide a relation between the
optimal consumption, the leisure process and the triple ($\lambda, D(t),\,v$).
As described in \cite[Section 3.2]{HPea}, we deal with the dual Arrow-Debreu state price problem, which corresponds to minimize over $\boldsymbol{v}\in K(\sigma)$ the maximum expected utility that the agent could obtain in a complete market with state price density $\Theta(\boldsymbol{v})$, defining the minimax local martingle measure. Thus $\boldsymbol{v}^*$ is chosen in such a way to mimimize the agent wealth. Once $\boldsymbol{v}^*$, the optimal multiplier $\lambda^*$ and the non-increasing process $D^*(t)$ are computed, we can use these conditions to obtain the optimal processes $c^*(t)$ and $l^*(t)$, as discussed in \cite[Appendix]{BM}.

In order to compute the triple $(\lambda^*,D^*(t),\boldsymbol{v}^*)$, we proceed as follows. Let
\begin{eqnarray*}
\widetilde{V}(\lambda, Y_0)&:=&\sup_{\tau\in[0,+\infty]}\inf_{\boldsymbol{v}\in K(\sigma),\,D(t)>0}
\widetilde{V}(\lambda,D,\tau, Y_0, \boldsymbol{v})\\
&=&\sup_{\tau\in[0,+\infty]}\inf_{D(t)>0}
\widetilde{V}(\lambda,D,\tau, Y_0, \boldsymbol{v}^*).
\end{eqnarray*}
Since (\ref{value}) and (\ref{valuet}) imply
\begin{equation*}
{\cal V}(W_0,Y_0)=\sup_{\tau\in[0,+\infty]}V_{\tau}(W_0,Y_0),
\end{equation*}
then, under the conditions of Proposition \ref{JV-eq}, it holds
\begin{eqnarray}\label{P2}
{\cal V}(W_0,Y_0)&=&\sup_{\tau\in[0,+\infty]}\inf_{\lambda>0,\,D(t)>0}\left[
\widetilde{V}(\lambda,D,\tau,Y_0,\boldsymbol{v}^*) + \lambda W_0\right]\\
&=&\inf_{\lambda>0}\left[
\widetilde{V}(\lambda,Y_0) + \lambda W_0\right].\nonumber
\end{eqnarray}
Notice that this formulation is related to \cite[Problem (D') page 277]{HPea}.

Let us define the process
\begin{equation}\label{P3}
z(t,\boldsymbol{v})=\lambda D(t) e^{\beta t} H(t,\boldsymbol{v}),\ \  z(0,\boldsymbol{v})=\lambda
\end{equation}
and
\begin{eqnarray}\nonumber
\phi(t,z,y):=\sup_{\tau>t}\inf_{\boldsymbol{v}\in K(\sigma),\,D(t)>0}\!\!
E\!\Big{[}\!\int_{t}^{\tau}\!\!\!\!e^{-\beta
s}\!\left\{\widetilde{u}(z(s,\boldsymbol{v}),Y(s))\!+\!Y(s) z(s,\boldsymbol{v})\right\}ds \!+\!
e^{-\beta\tau}\widetilde{U}(z(\tau,\boldsymbol{v}))\\
\Big{|} z(t,\boldsymbol{v})=z, Y(t)=y \Big{]}.\label{phi}
\end{eqnarray}

Since $\phi(0,\lambda,Y_0)=\widetilde{V}(\lambda,Y_0)$, once $\phi$ is computed, we can use (\ref{P2}) to obtain the value function ${\cal V}(W_0,Y_0)$ and thus the optimal couple $(\lambda^*,D^*(t))$ and the optimal strategies, as shown in \cite[Appendix]{BM}.

\section{The Bellman equation and the free boundary problem}
\label{HJB}
To address the optimization problem and to find out $\phi(\cdot)$ and the function $\boldsymbol{v}^*$,
we start from the dual function $\widetilde{U}$.
After retirement the optimal consumption-investment problem is a classical
Merton problem, considering the utility function (\ref{utilityfunction}), we have
\begin{equation*}
U(W(\tau))=\sup_{(c,\boldsymbol{\theta})} E\left[ \int_{\tau}^{+\infty}
e^{-\beta (t-\tau)} \frac{c^{\alpha(1-\gamma)}}{\alpha(1-\gamma)}
dt\right],
\end{equation*}
subject to the dynamic budget constraint (\ref{budget}) without labor income, i.e., $l(t)=1\, \forall t \geq \tau$.

It follows from standard results (see \cite[Chapter 3]{KS}) that
\begin{equation*}
U(w)=\left(\frac{1}{\xi}\right)^{\Gamma}\frac{w^{1-\Gamma}}{1-\Gamma}
\end{equation*}
where $\Gamma=1-\alpha(1-\gamma)$ and $\xi=\frac{\Gamma-1}{\Gamma}\left(r+\frac{\boldsymbol{\Theta}^2(\boldsymbol{v}^*)}{2\Gamma}
\right)+\frac{\beta}{\Gamma}$. In order to guarantee that the function $U$ is well defined, we assume $\xi>0$. Thus, since $U$ is known analytically, we can also compute $\widetilde{U}$ in (\ref{dualU}):
\begin{equation*}
\widetilde{U}(z)=\frac{\Gamma}{\xi(1-\Gamma)}z^{ \frac{\Gamma-1}{\Gamma}}.
\end{equation*}

Since by It\^o's formula we have, for any $\boldsymbol{v} \in K(\sigma)$
\begin{eqnarray*}
\frac{dz(t,\boldsymbol{v})}{z(t,\boldsymbol{v})}=\frac{dD(t)}{D(t)}+(\beta-r)dt-\boldsymbol{\Theta}^{\top}(\boldsymbol{v}) d\boldsymbol{B}(t),\ \ \ z(0)=\lambda,
\end{eqnarray*}
in order to derive the Bellman equation, we follow \cite{HP} controlling the decreasing process $D$. Assume that $D$ is absolutely continuous with respect to $t$, i.e., there exists $\psi\geq0$ such that $dD(t)=-D(t)\psi(t)dt$. If $\phi$ is twice differentiable with respect to $z$ and $y$, then the Bellman equation becomes
\begin{eqnarray}\label{BE1}
\min_{\boldsymbol{v}\in K(\sigma),\, \psi\geq 0}\left\{ e^{-\beta t} (\widetilde{u}\left(z,\,y\right)+zy)-\psi(t)z\frac{\partial \phi}{\partial z}+\mathcal{L}_t^0\phi  \right\}=0,
\end{eqnarray}
where
\begin{equation*}
{\cal L}_t^0 \phi=\frac{\partial \phi}{\partial t}+(\beta-r)z\frac{\partial\phi}{\partial z}+
\mu_1 y\frac{\partial\phi}{\partial y}+\frac{1}{2}\boldsymbol{\Theta}^2(\boldsymbol{v})
z^2\frac{\partial^2\phi}{\partial z^2}+\frac{1}{2}
\boldsymbol{\mu}_2^2 y^2\frac{\partial^2\phi}{\partial y^2}- \boldsymbol{\Theta}^{\top}(\boldsymbol{v})
\boldsymbol{\mu}_2 zy\frac{\partial^2\phi}{\partial z\partial y}
\end{equation*}
is the differential operator for the dual indirect utility function. Condition (\ref{BE1}) can be rewritten compactly as
\begin{eqnarray}\label{BE2}
\min\left\{ e^{-\beta t} (\widetilde{u}\left(z,\,y\right)+zy)+\mathcal{L}_t\phi,\ -\frac{\partial \phi}{\partial z}  \right\}=0,
\end{eqnarray}
where
\begin{equation*}
{\cal L}_t \phi=\frac{\partial \phi}{\partial t}+\mu_1y\frac{\partial\phi}{\partial y}+\frac{1}{2}
\boldsymbol{\mu}_2^2 y^2\frac{\partial^2\phi}{\partial y^2}+ 
(\beta-r)z\frac{\partial\phi}{\partial z}
+ \min_{\boldsymbol{v}\in K(\sigma)} \left[ \frac{1}{2}\boldsymbol{\Theta}^2(\boldsymbol{v})
z^2\frac{\partial^2\phi}{\partial z^2}- \boldsymbol{\Theta}^{\top}(\boldsymbol{v})
\boldsymbol{\mu}_2 z y\frac{\partial^2\phi}{\partial z\partial y}\right]
\end{equation*}
is the differential operator for the dual indirect utility function, i.e.,
\begin{equation}\label{operator}
{\cal L}_t \phi=\frac{\partial \phi}{\partial t}+\mu_1y\frac{\partial\phi}{\partial y}+\frac{1}{2}
\boldsymbol{\mu}_2^2 y^2\frac{\partial^2\phi}{\partial y^2}+ 
(\beta-r)z\frac{\partial\phi}{\partial z}
+ \frac{1}{2}\boldsymbol{\Theta}^2(\boldsymbol{v}^*)
z^2\frac{\partial^2\phi}{\partial z^2}- \boldsymbol{\Theta}^{\top}(\boldsymbol{v}^*)
\boldsymbol{\mu}_2 z y\frac{\partial^2\phi}{\partial z\partial y}
\end{equation}
and 
\begin{equation*}
\boldsymbol{v}^*= arg \min_{\boldsymbol{v}\in K(\sigma)} 
\frac{1}{2}\boldsymbol{\Theta}^2(\boldsymbol{v})
z^2\frac{\partial^2\phi}{\partial z^2}- \boldsymbol{\Theta}^{\top}(\boldsymbol{v})
\boldsymbol{\mu}_2 z y\frac{\partial^2\phi}{\partial z\partial y}.
\end{equation*}
Notice that the minimum over $\boldsymbol{v}\in K(\sigma)$ comes from the minimax local martingale measure \cite[Page 275]{HPea}. Moreover, since $\Theta(\boldsymbol{v})$ is a monotonic function with respect to $\boldsymbol{v}$, it holds
\begin{equation*}
{\cal L}_t \phi=\frac{\partial \phi}{\partial t}+\mu_1y\frac{\partial\phi}{\partial y}+\frac{1}{2}
\boldsymbol{\mu}_2^2 y^2\frac{\partial^2\phi}{\partial y^2}+ 
(\beta-r)z\frac{\partial\phi}{\partial z}
+ \min_{\Theta(\boldsymbol{v}),\,\boldsymbol{v}\in K(\sigma)} \left[\frac{1}{2}\boldsymbol{\Theta}^2(\boldsymbol{v})
z^2\frac{\partial^2\phi}{\partial z^2}- \boldsymbol{\Theta}^{\top}(\boldsymbol{v})
\boldsymbol{\mu}_2 z y\frac{\partial^2\phi}{\partial z\partial y}\right]
\end{equation*}
is the differential operator for the dual indirect utility function. Furthermore, we recall that the space $K(\sigma)$ consists of vectors $[0;x],\,x\in\mathbb{R}$, thus it holds
\begin{equation}\label{Theta}
[\boldsymbol{\Theta}]_1(\boldsymbol{v}^*)=\frac{b-r}{\sigma}\ \ \ \   [\boldsymbol{\Theta}]_2(\boldsymbol{v}^*)=\frac{[\boldsymbol{\mu}_2]_2 y\frac{\partial^2\phi}{\partial z\partial y}}{z\frac{\partial^2\phi}{\partial z^2}}=[\boldsymbol{v}^*]_2,
\end{equation}
where we denote with $[\cdot]_i$ the $i$-th component of a vector.

\subsection{Correlated Wage}\label{Cor}
If the wage is given by
\begin{equation*}
dY(t)= M_1 Y(t)dt + \rho M_2 Y(t)  d B_1(t)+  \sqrt{1-\rho^2}M_2 Y(t)  dB_2(t),
\end{equation*}
i.e., $\boldsymbol{\mu_2}$ is equal to the vector $[\rho M_2,\, \sqrt{1-\rho^2} M_2 ]$, $\rho$ being the correlation between the wage and the risky asset, then the above pde becomes
\begin{eqnarray*}
{\cal L}_t \phi&=&\frac{\partial \phi}{\partial t}+M_1y\frac{\partial\phi}{\partial y}+\frac{1}{2}
M_2^2 y^2\frac{\partial^2\phi}{\partial y^2}+ 
(\beta-r)z\frac{\partial\phi}{\partial z}
+ \frac{1}{2}\Theta^2
z^2\frac{\partial^2\phi}{\partial z^2}-\rho \Theta M_2 z y\frac{\partial^2\phi}{\partial z\partial y}
\\&+& \frac{1}{2} (\Theta^*)^2
z^2\frac{\partial^2\phi}{\partial z^2}- \Theta^*
\sqrt{1-\rho^2} M_2 z y\frac{\partial^2\phi}{\partial z\partial y},
\end{eqnarray*}
where
\begin{equation*}
\Theta=\sigma^{-1}(b-r), \ \ \textrm{and}\ \  \Theta^*=\frac{\sqrt{1-\rho^2} M_2 y\frac{\partial^2\phi}{\partial z\partial y}}{z\frac{\partial^2\phi}{\partial z^2}},
\end{equation*}
i.e.,
\begin{equation*}
{\cal L}_t \phi=\frac{\partial \phi}{\partial t}+M_1y\frac{\partial\phi}{\partial y}+\frac{1}{2}
M_2^2 y^2\frac{\partial^2\phi}{\partial y^2}+ 
(\beta-r)z\frac{\partial\phi}{\partial z}
+ \frac{1}{2}\Theta^2
z^2\frac{\partial^2\phi}{\partial z^2}-\rho \Theta M_2 z y\frac{\partial^2\phi}{\partial z\partial y}
-\frac{1}{2} (1-\rho^2) M_2^2 y^2 \frac{\left(\frac{\partial^2\phi}{\partial z\partial y}\right)^2}{\frac{\partial^2\phi}{\partial z^2}}.
\end{equation*}
Notice that if $\rho=\pm 1$ we move back to the complete market case, and, if $\rho=1$, we get the pde in \cite[Section 5]{BM}.%, that is
%\begin{equation*}
%{\cal L}_t \phi=\frac{\partial \phi}{\partial t}+M_1y\frac{\partial\phi}{\partial y}+\frac{1}{2}
%M_2^2 y^2\frac{\partial^2\phi}{\partial y^2}+ 
%(\beta-r)z\frac{\partial\phi}{\partial z}
%+ \frac{1}{2}\Theta^2
%z^2\frac{\partial^2\phi}{\partial z^2}- \Theta M_2 z y\frac{\partial^2\phi}{\partial z\partial y}.
%\end{equation*}

\section{Conclusions}

In this study, we have analyzed the optimal retirement problem in the presence of a stochastic labor income under an incomplete market framework. By extending the work of \cite{BM}, we considered a scenario where labor income cannot be perfectly hedged through financial market investments, leading to the formulation of a free boundary problem.

We approached the problem using the duality method, which allowed us to reformulate the consumption-investment-leisure optimization problem as a dual shadow price problem. This reformulation enabled us to derive a pde characterizing the free boundary, which determines the optimal retirement decision. 

The obtained results provide insights into the effects of labor income risk on the optimal retirement decision. In particular the market incompleteness modifies the optimal retirement boundary, affecting the optimal consumption, investment, and leisure policies. The duality approach offers a tractable framework to handle the complexities introduced by non-hedgeable labor income. Our formulation generalizes existing models, recovering the complete market case as a special scenario.

Our findings contribute to the understanding of optimal retirement decisions in stochastic environments and provide a framework for future studies on lifetime consumption-investment problems under market imperfections.

\end{document}